\documentclass[a4paper,12pt]{amsart}
\usepackage{amsmath}
\usepackage{amsfonts}
\usepackage{amssymb}
\usepackage{listings}
\usepackage{mathtools}
\usepackage{hyperref}
\usepackage{amsthm}
\usepackage{verbatim}
\usepackage{bbm}

\usepackage[
]{todonotes}

\usepackage{geometry}
\usepackage{graphicx}

\usepackage[T1]{fontenc}
\usepackage[utf8]{inputenc}
\usepackage{lmodern}
\usepackage{algorithm}
\usepackage{algorithmic}
\usepackage{array}
\usepackage{multirow}
\usepackage{booktabs}

\usepackage{subcaption}
\newcommand{\rset}{\mathbb{R}}

\newcommand{\TR}{\mathrm{Tr}\,} %
\newcommand{\Hphys}{\mathcal{H}_{\mathrm{phys}}}
\newcommand{\Pphys}{\Pi_{\mathrm{phys}}}
\newcommand{\Qphys}{Q_{\mathrm{phys}}}
\newcommand{\Ephys}{E_{\mathrm{phys\text{-}proj}}}
\newcommand{\epsphys}{\epsilon_{\mathrm{phys}}}
\newcommand{\EFCI}{E_{\mathrm{FCI}}}
\newcommand{\pHEA}{\mathrm{pHEA}}
\newcommand{\ket}[1]{|#1\rangle}
\newcommand{\bra}[1]{\langle #1|}
\newcommand{\braket}[1]{\langle #1\rangle}

\numberwithin{figure}{section}
\numberwithin{table}{section}
\numberwithin{equation}{section}
\newtheorem{theorem}{Theorem}[section]

\newtheorem{proposition}[theorem]{Proposition}

\newtheorem{definition}[theorem]{Definition}
\newtheorem{remark}[theorem]{Remark}

\author[Y.-C. Chen]{Yuan-Chieh Chen}
\address{Department of Mathematical Sciences, University of Delaware, Newark, DE 19716, USA}
\email{ycchen@udel.edu}

\date{}

\begin{document}
\title[]{Energy Is Not Enough: Leakage-Aware Ansatz Criterion for Variational Quantum Eigensolvers}
\maketitle

\begin{abstract}
In molecular variational quantum eigensolver
(VQE) calculations for a fixed electron-number sector, the
projected-sector energy alone does not certify that a prepared
state is physically valid. 
We show that a state can have zero projected-sector error while almost all of its probability mass lies outside the target particle-number sector. 
We therefore evaluate ansatz families using three quantities: projected-sector error, leakage outside the physical sector and parameter count. 
Our noiseless experiments show that particle-preserving UCC-style families eliminate leakage induced by ansatze, but do not guarantee ground-state accuracy.
In the parameter-budget comparison experiments, the one-shot ranking rules select identical generator sequences and yield identical errors.
The adaptive hybrid procedure achieves lower mean projected-sector errors at selected budgets.
Using established Kraus invariant-subspace criteria, we also express channel-induced leakage through a positive operator and discuss global and local depolarizing noise analytically.
This gives a practical diagnostic framework for comparing ansatze beyond energy alone.

\end{abstract}

\tableofcontents


%
\textbf{Keywords. } variational quantum eigensolver, ansatz design,
particle-number leakage, symmetry verification, quantum error mitigation, NISQ.

\section{Introduction}

The variational quantum eigensolver \cite{peruzzo2014} approximates the molecular ground-state energy of a molecular Hamiltonian $H$ by preparing a parameterized state $\ket{\psi(\theta)}=U(\theta)\ket{\psi_{\mathrm{HF}}}$ from an ansatz $U(\theta)$ and a Hartree--Fock initial state $\ket{\psi_{\mathrm{HF}}}$, and minimizing $\bra{\psi(\theta)}H\ket{\psi(\theta)}$ with a classical optimizer. 
It's one of the most studied algorithms in the Noisy Intermediate-Scale Quantum (NISQ) era, for example, it's reviewed in \cite{mcardle2020,tilly2022}. 
For a molecule with $m$ spin orbitals the fermionic Fock space decomposes into fixed-particle-number sectors
\begin{equation}\label{eq:sectors}
\mathcal{F}=\bigoplus_{n=0}^{m}\mathcal{H}_n,
\qquad
\mathcal{H}_n=\ker(N-nI),
\end{equation}
where $N$ is the particle-number operator.
For a target electron number $N_e$, the physically valid sector is $\Hphys=\ker(N-N_eI)$.
Since the molecular Hamiltonian conserves particle number, $[H,N]=0$, each sector is invariant under $H$. 
Hence the chemically meaningful quantity is the lowest eigenvalue within $\Hphys$ rather than over the full Hilbert space.

An ansatz need not preserve the target particle
number. 
Moreover, an energy evaluated after
projection onto $\Hphys$ describes the normalized
projected state, not the full prepared state.
Let $P$ denote the orthogonal projector onto
$\Hphys$.
For a normalized prepared state
$\ket{\psi}$, leakage is 
$$\Lambda=1-\bra{\psi}P\ket{\psi}.$$
A small projected-sector energy error does not imply small leakage by itself.

This motivates reporting a VQE result not as an energy alone but as the triple
\begin{equation}\label{eq:triple-intro}
\bigl(\epsphys,\ \Lambda,\ \#\theta\bigr)
\end{equation}
of projected-sector error, leakage outside the sector, and parameter count. Here $\epsphys$ is measured relative to the target-sector ground-state energy when that reference is known.

Two structural results explain why projected-sector error must be distinguished from leakage.
Ansatze based on excitation generators satisfy $\Lambda \equiv 0$ identically, which is standard and is recorded here as
Proposition~\ref{prop:preserve} because it is what makes the UCC-style families a controlled reference.
Conversely, Theorem~\ref{thm:mask} shows that a state can have exactly zero projected-sector error while its leakage is arbitrarily close to one.

Leakage from a computational subspace has been characterized by Wood and Gambetta \cite{wood2018}.
In our setting, the relevant subspace is the fixed-particle-number sector $\Hphys$.
Streif et al.\ analyze this form of leakage under local depolarizing noise \cite{streif2021}.

In VQE, symmetry verification and post-selection restrict energy estimation to a fixed symmetry sector \cite{bonet2018}.
Huggins et al.\ incorporate particle-number information into energy measurements, enabling post-selection on the target electron number \cite{huggins2021}.
Symmetry-projected VQE combines projection directly into the variational objective through the normalized projected-energy quotient \cite{yen2019projectors}.
The distinction relevant here is between the energy of the normalized projected state and the leakage of the prepared state. 
The former does not determine the latter.
Building on these established concepts, we report projected-sector error, leakage, and parameter count to distinguish conditional energy accuracy from particle-number preservation at a specified ansatz size.

We compare particle-number-preserving ansatze with a non-preserving two-local ansatz optimized using raw energy, projected energy, or projected energy with a penalty.
We also examine one-shot generator ranking and
adaptive selection from an enriched excitation-plus-commutator pool.
The numerical results compare projected-sector
error, leakage, and parameter count.

A two-page abstract and poster titled ``Energy Is Not Enough: Leakage-Aware
Ansatz Criterion for VQE'' formed an earlier version of this study and were
accepted for IEEE QCE 2026. This extended manuscript provides fuller methods, attribution, reproducibility information, and limitations.

\subsection*{Organization}
Sections~\ref{sec:setup}--\ref{sec:channel} define the diagnostic and its state- and channel-level properties. Section~\ref{sec:ansatz} specifies the ansatze and selection rules, Section~\ref{sec:numerics} reports the experiments, and Sections~\ref{sec:discussion}--\ref{sec:conclusion} discuss their scope and conclusions.

\section{Setup and the diagnostic triple}\label{sec:setup}

Here $\mathcal{H}$ is a finite dimensional Hilbert space and Hamiltonian $H=H^\dagger$. Let $\Pphys$ be the nonzero orthogonal projector onto the target particle-number sector. Assume that $[H,\Pphys]=0$ and set
$\Qphys=I-\Pphys$. All definitions are stated for a positive-semidefinite density matrix $\rho$, so the same quantities serve the noiseless and the noisy settings.
\begin{definition}[Leakage]\label{def:leak}
The leakage of a state $\rho$ is
\begin{equation}\label{eq:leakage}
\Lambda(\rho)=\TR(\Qphys\,\rho).
\end{equation}
\end{definition}

\begin{definition}[Projected-sector energy and error]\label{def:eproj}
If $\mathrm{Tr}(\Pphys \rho)>0$, the projected-sector energy and the physical-sector error are
\begin{equation}\label{eq:eproj}
\Ephys(\rho)=\frac{\TR\bigl(H\,\Pphys\rho\Pphys\bigr)}{\TR(\Pphys\rho)},
\qquad
\epsphys(\rho)=\bigl|\Ephys(\rho)-\EFCI\bigr|,
\end{equation}
where $\EFCI=\lambda_{\min}(H|_{\Hphys})$ is the full configuration interaction (FCI) reference within the target sector.
\end{definition}

Note that when $\TR(\Pphys\rho)=0$, equivalently, we have $\Lambda=1$. So the projected-sector state
does not exist and both $\Ephys$ and $\epsphys$ are undefined. No energy value
is reported for that case.

The diagnostic we propose is the triple
\begin{equation}\label{eq:triple}
\bigl(\epsphys,\ \Lambda,\ \#\theta\bigr),
\end{equation}
recording target-sector accuracy, leakage, and ansatz size, where
$\#\theta$ is the number of variational parameters. 
For a VQE result to be interpretable we claim that all three should be reported together.
Projected-sector error alone does not bound leakage away from one, while parameter count records ansatz size separately from these state properties.

Under Jordan--Wigner transformation \cite{jordan1928}, leakage can be estimated
by measuring all qubits in the computational basis and
recording the fraction of outcomes with a particle
number different from the target number $N_e$.

\section{Sector preservation and the failure mode}\label{sec:theory}

\begin{proposition}[Sector preservation]\label{prop:preserve}
If $U(\theta)=\prod_k e^{\theta_kT_k}$ with $[T_k,N]=0$ for all $k$, and
$\ket{\psi_{\mathrm{HF}}}\in\Hphys$, then
$U(\theta)\ket{\psi_{\mathrm{HF}}}\in\Hphys$ for all $\theta$.
\end{proposition}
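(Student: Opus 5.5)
The plan is to show that each factor $e^{\theta_k T_k}$ maps $\Hphys$ into itself, and then compose the factors. The key observation is that an operator commuting with $N$ preserves every eigenspace of $N$, in particular $\Hphys=\ker(N-N_eI)$. For each fixed $k$, take $v\in\Hphys$. Then
\[
N T_k v = T_k N v = N_e\, T_k v,
\]
so $T_k v\in\Hphys$, and $\Hphys$ is invariant under $T_k$.

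Next we pass from $T_k$ to its exponential. Since $\mathcal{F}$ is finite dimensional, $e^{\theta_k T_k}=\sum_{j\ge 0}\theta_k^j T_k^j/j!$ converges in norm. By induction, every power $T_k^j$ leaves $\Hphys$ invariant. Hence each partial sum maps $\Hphys$ into $\Hphys$. A finite-dimensional subspace is closed, so the limit $e^{\theta_k T_k}v$ also lies in $\Hphys$. Alternatively, one can note that $[T_k,N]=0$ implies $[e^{\theta_k T_k},N]=0$ directly from the series, and then reuse the eigenspace argument above.

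Finally, $U(\theta)$ is a finite product of such exponentials. Applying them one at a time to $\ket{\psi_{\mathrm{HF}}}\in\Hphys$ keeps the vector in $\Hphys$ at every stage, so $U(\theta)\ket{\psi_{\mathrm{HF}}}\in\Hphys$ for all $\theta$. As a consequence, $\Lambda\equiv 0$ for the pure state $\rho=U\ket{\psi_{\mathrm{HF}}}\bra{\psi_{\mathrm{HF}}}U^\dagger$.

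There is no real obstacle here. The only point needing care is the passage from $T_k$ to $e^{\theta_k T_k}$, which requires closedness of $\Hphys$ and is automatic in finite dimensions. No anti-Hermiticity or unitarity of the $T_k$ is needed for the invariance itself. For UCC-type generators $T_k=\tau_k-\tau_k^\dagger$, the hypothesis $[T_k,N]=0$ holds because each excitation operator $\tau_k$ contains equal numbers of creation and annihilation operators.
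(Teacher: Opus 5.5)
Your proof is correct and follows the same route as the paper: since $[T_k,N]=0$, each $T_k$ leaves every eigenspace of $N$ invariant, this invariance passes to $e^{\theta_k T_k}$, and then to the finite product. You also write out the power-series and closedness step for the exponential, which the paper states in a single line.
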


\begin{proof}
Since $[T_k,N]=0$, each $T_k$ leaves every eigenspace of $N$ invariant, hence so does $e^{\theta_kT_k}$ for every $\theta_k\in\rset$, and hence so does any product of such factors. As $\Hphys=\ker(N-N_eI)$ is precisely the $N_e$-eigenspace of $N$ and $\ket{\psi_{\mathrm{HF}}}\in\Hphys$, the image
remains in $\Hphys$.
\end{proof}

As a result, Ansatze built from excitation generators, such as UCCSD, satisfy $\Lambda \equiv 0 $ and provide a leakage-free reference. This applies equally to the UCC-style
families we used: UCCSD, SelectedUCC, RandomUCC and PCSD,  since all are drawn from the same excitation pool.

\subsection{Projection masks leakage}

\begin{theorem}[Projection masks leakage]\label{thm:mask}
Let $\ket{\phi_0}\in\Hphys$ be an FCI ground state of $H|_{\Hphys}$ and let $\ket{\eta}\in\Hphys^{\perp}$ be any normalized state. For any $\lambda\in[0,1)$ the state
\begin{equation}\label{eq:psilambda}
\ket{\psi_\lambda}
=\underbrace{\sqrt{1-\lambda}\,\ket{\phi_0}}_{\text{physical}}
+\underbrace{\sqrt{\lambda}\,\ket{\eta}}_{\text{leaked}}
\end{equation}
satisfies $\Ephys(\psi_\lambda)=\EFCI$, hence $\epsphys(\psi_\lambda)=0$, but
$\Lambda(\psi_\lambda)=\lambda$.
\end{theorem}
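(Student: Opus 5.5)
The plan is a direct computation with the density matrix $\rho_\lambda=\ket{\psi_\lambda}\bra{\psi_\lambda}$, using only Definitions~\ref{def:leak} and~\ref{def:eproj}. First I would check that $\ket{\psi_\lambda}$ is normalized. Since $\ket{\phi_0}\in\Hphys$ and $\ket{\eta}\in\Hphys^{\perp}$ are orthogonal unit vectors, $\langle\psi_\lambda|\psi_\lambda\rangle=(1-\lambda)+\lambda=1$, so $\rho_\lambda$ is a valid pure state.

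Next I would apply the projectors to the state. Because $\Pphys$ is the orthogonal projector onto $\Hphys$, we have $\Pphys\ket{\phi_0}=\ket{\phi_0}$ and $\Pphys\ket{\eta}=0$. Hence $\Pphys\ket{\psi_\lambda}=\sqrt{1-\lambda}\,\ket{\phi_0}$ and $\Qphys\ket{\psi_\lambda}=\sqrt{\lambda}\,\ket{\eta}$. These give directly
\begin{equation*}
\Lambda(\rho_\lambda)=\bra{\psi_\lambda}\Qphys\ket{\psi_\lambda}=\lambda,
\qquad
\TR(\Pphys\rho_\lambda)=1-\lambda>0 .
\end{equation*}
The positivity uses $\lambda<1$, so the projected-sector energy is defined. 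This is the only place the restriction $\lambda\in[0,1)$ enters, and it matches the remark after Definition~\ref{def:eproj} that the energy is undefined when $\Lambda=1$.

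For the energy, I would write $\TR(H\Pphys\rho_\lambda\Pphys)=\bra{\psi_\lambda}\Pphys H\Pphys\ket{\psi_\lambda}=(1-\lambda)\bra{\phi_0}H\ket{\phi_0}$. Since $\ket{\phi_0}$ is a normalized eigenvector of $H|_{\Hphys}$ for its lowest eigenvalue, $\bra{\phi_0}H\ket{\phi_0}=\EFCI$. Dividing by $\TR(\Pphys\rho_\lambda)=1-\lambda$ gives $\Ephys(\psi_\lambda)=\EFCI$, and therefore $\epsphys(\psi_\lambda)=0$.

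No step is a real obstacle. The one point needing care is that the numerator involves $H$ only through $\Pphys H\Pphys$, so the cross terms between $\ket{\phi_0}$ and $\ket{\eta}$ vanish before $H$ acts. This does not even require $[H,\Pphys]=0$, although that assumption is what makes $H|_{\Hphys}$, and hence $\EFCI$, well defined as a restriction. Finally, since $\lambda$ may be taken arbitrarily close to $1$, zero projected-sector error is compatible with leakage arbitrarily close to one, which is the claimed masking.
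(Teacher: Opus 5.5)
Your proposal is correct and follows essentially the same route as the paper: compute $\Pphys\ket{\psi_\lambda}=\sqrt{1-\lambda}\,\ket{\phi_0}$ and $\Qphys\ket{\psi_\lambda}=\sqrt{\lambda}\,\ket{\eta}$, then read off the energy ratio and the leakage. Your explicit normalization check and the observation that $\TR(\Pphys\rho_\lambda)=1-\lambda>0$ is where $\lambda<1$ is needed are steps the paper leaves implicit.
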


\begin{proof}
Since $\ket{\eta}\in\Hphys^{\perp}$, we have
$\Pphys\ket{\psi_\lambda}=\sqrt{1-\lambda}\ket{\phi_0}$. Substituting into
\eqref{eq:eproj}, the numerator is
$(1-\lambda)\bra{\phi_0}H\ket{\phi_0}=(1-\lambda)\EFCI$ and denominator is 
$(1-\lambda)$, so the ratio is exactly $\EFCI$ for every $\lambda\in[0,1)$.
Also, applying $\Qphys$ to $\ket{\psi_\lambda}$, we have $\Qphys\ket{\psi_\lambda}=\sqrt{\lambda}\ket{\eta}$, giving
$\Lambda(\psi_\lambda)=\lambda$.
\end{proof}

This is the central observation. It tells us that the projected energy $\Ephys$ only depends on the direction of the component inside the target sector, not on its norm. 
So reporting $\Ephys$ alone can be misleading by an arbitrary amount. 
For example, if $\lambda\to1$, we can see that almost all probability mass lies outside the physical sector
while the reported error remains exactly zero.
Also, we can't use mean particle number as a certificate.

\begin{remark}[Mean particle number does not generally certify sector membership]
\label{rem:meanN}
For $n$ spin orbitals and $1\leq N_e\leq n-1$,
the condition $\braket{\psi|N|\psi}=N_e$
does not imply $\Lambda(\psi)=0$.
Let $\ket{\phi_\pm}$ be normalized states in
$\mathcal H_{N_e\pm1}$, respectively.
Then
$$
\ket{\psi}=\frac{1}{\sqrt{2}} \left(\ket{\phi_-}+\ket{\phi_+}\right)
$$
satisfies $\braket{\psi|N|\psi}=N_e$ but $\Lambda(\psi)=1$.

\end{remark}

Proposition~\ref{prop:preserve} and Theorem~\ref{thm:mask} together motivate the triple \eqref{eq:triple}. The Proposition says that some ansatz families is safe by construction, and the Theorem says that for the other family without measuring the leakage $\Lambda$, energy itself is not enough.

\section{Channel-level theory}\label{sec:channel}

The diagnostic generalizes from states to noisy hardware. 
Let $P=\Pphys,~ Q=I-P$ and $\Phi_*(\rho)=\sum_\alpha K_\alpha\rho K_\alpha^{\dagger}$ be a
CPTP channel with Heisenberg adjoint
$\Phi(X)=\sum_\alpha K_\alpha^{\dagger}XK_\alpha$. 
We define the leakage operator
\begin{equation}\label{eq:leakop}
L_{\Phi,P}=P\,\Phi(Q)\,P=\sum_\alpha (QK_\alpha P)^{\dagger}(QK_\alpha P).
\end{equation}
For every state $\rho=P\rho P$ which is supported in the physical sector, we have
\begin{equation}\label{eq:leaktrace}
\Lambda_{\Phi_*,P}(\rho)=\TR\bigl(Q\,\Phi_*(\rho)\bigr)
=\TR\bigl(\rho\,L_{\Phi,P}\bigr).
\end{equation}

\begin{remark}[Relation to the leakage rate]\label{rem:l1}
Wood and Gambetta \cite{wood2018} define the leakage rate of a channel as the average of the state leakage over the computational subspace, equivalently its value on the maximally mixed state of that subspace,
$L_1(\Phi_*)=\Lambda\bigl(\Phi_*(P/d_{\mathrm{phys}})\bigr)$ with
$d_{\mathrm{phys}}=\operatorname{rank}(P)$. 
Comparing with \eqref{eq:leaktrace},
\begin{equation}\label{eq:l1}
L_1(\Phi_*)=\frac{\TR\bigl(L_{\Phi,P}\bigr)}{d_{\mathrm{phys}}}.
\end{equation}
For a fixed channel $\Phi_*$ and target projector $P$, the operator $L_{\Phi,P}$ determines leakage for every initial state supported in $\operatorname{ran}P$.
Its expectation depends on the initial state $\rho$, while the operator itself depends on both $\Phi_*$ and $P$.
When $\Phi_*$ represents a noise channel, it may also depend on the implemented
gate sequence and parameters.
\end{remark}

\begin{proposition}[Kraus no-leakage criterion]\label{prop:kraus}
$L_{\Phi,P}=0$ if and only if $QK_\alpha P=0$ for every Kraus operator
$K_\alpha$.
\end{proposition}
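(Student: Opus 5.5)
The plan is to use the factorized form of the leakage operator from \eqref{eq:leakop}. First I will verify that identity. Because $P$ and $Q$ are orthogonal projectors, $Q=Q^2=Q^\dagger Q$. Therefore
\[
P\,\Phi(Q)\,P=\sum_\alpha PK_\alpha^\dagger Q^\dagger Q K_\alpha P=\sum_\alpha (QK_\alpha P)^\dagger (QK_\alpha P).
\]
This writes $L_{\Phi,P}$ as a sum of operators $A_\alpha^\dagger A_\alpha$ with $A_\alpha=QK_\alpha P$. Each term is positive semidefinite, so $L_{\Phi,P}$ is positive semidefinite as well.

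The ``if'' direction is then immediate. If $QK_\alpha P=0$ for every $\alpha$, every summand vanishes and $L_{\Phi,P}=0$.

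For the ``only if'' direction, assume $L_{\Phi,P}=0$. I will take the trace and use $\TR(A_\alpha^\dagger A_\alpha)=\|A_\alpha\|_F^2\ge 0$. This gives
\[
0=\TR L_{\Phi,P}=\sum_\alpha \|QK_\alpha P\|_F^2.
\]
A sum of nonnegative terms vanishes only when every term vanishes, so $\|QK_\alpha P\|_F=0$ and hence $QK_\alpha P=0$ for all $\alpha$. The same conclusion also follows vector by vector: for any $v$ we have $0=\langle v,L_{\Phi,P}v\rangle=\sum_\alpha\|QK_\alpha Pv\|^2$.

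I do not expect a real obstacle; the only care needed is in the setup. The Kraus sum should be finite, which holds because $\mathcal H$ is finite dimensional, or at least the series should be convergent so that trace and summation can be exchanged. I should also note that the criterion does not depend on which Kraus representation is chosen. This is automatic, because $L_{\Phi,P}$ is defined through $\Phi$ alone, so the statement holds for any representation. Finally, via \eqref{eq:leaktrace}, the result can be read as: the channel causes zero leakage for every physical input state exactly when each Kraus operator maps $\operatorname{ran}P$ into itself.
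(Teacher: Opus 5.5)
Your proof is correct and follows the same route as the paper: you write $L_{\Phi,P}$ as a sum of positive semidefinite terms $(QK_\alpha P)^\dagger(QK_\alpha P)$ and conclude that it vanishes exactly when every $QK_\alpha P$ does. Your trace (Frobenius-norm) argument makes explicit the two facts the paper simply cites: a sum of positive semidefinite operators is zero only if each summand is zero, and $A^\dagger A=0$ forces $A=0$.
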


\begin{proof}
Note that each summand $(QK_\alpha P)^{\dagger}(QK_\alpha P)$ in \eqref{eq:leakop} is
positive semidefinite. Hence the sum vanishes if and only if every summand vanishes. Using the fact that $A^{\dagger}A=0$ if and only if $A=0$ the result is immediate.
\end{proof}

Proposition~\ref{prop:kraus} is the channel-level analogue of
Proposition~\ref{prop:preserve} where the noiseless statement asks that every generator commute with $N$, the noisy statement asks that no Kraus operator has weight from inside the sector to outside it.

\subsection{Example 1: Global depolarizing noise}

For global depolarizing noise, we have $\Phi_*(\rho)=q\rho+(1-q)I/d$ with $q\in[0,1]$. 
Taking the Heisenberg adjoint, we have $\Phi(Q)=qQ+(1-q)\TR(Q)I/d$ and hence
\begin{equation}\label{eq:glob}
L_{\Phi,P}=(1-q)\,\frac{d-d_{\mathrm{phys}}}{d}\,P,
\qquad
\Lambda_{\mathrm{glob}}=(1-q)\,\frac{d-d_{\mathrm{phys}}}{d},
\end{equation}
with $d=\dim\mathcal{H}$. Because $L_{\Phi,P}$ is proportional to $P$, the leakage is the same for every state in the sector. It is fixed by dimension and the noise probability alone, and it has no information distinguishing one ansatz from another. 

\subsection{Example 2: Local depolarizing noise and Pauli channels}

Streif et al.~\cite{streif2021} derive an exact formula
for the probability of remaining in a fixed particle-number sector under identical, independent local depolarizing noise.
Equivalently, the resulting leakage is the same for every input state within that sector. 
This doesn't hold for a general Pauli channel.

For a Pauli channel
$$
\Phi(\rho)=\sum_\alpha p_\alpha W_\alpha\rho W_\alpha^\dagger,
\qquad
W_\alpha\in\{I,X,Y,Z\}^{\otimes n},
\quad p_\alpha\geq0,\quad \sum_\alpha p_\alpha=1,
$$
substitution into \eqref{eq:leakop} gives
\begin{equation}\label{eq:pauli}
L_{\Phi,P}
=\sum_\alpha p_\alpha
(QW_\alpha P)^\dagger(QW_\alpha P).
\end{equation}
For an initially in-sector state $\rho$, the resulting
leakage is $\operatorname{Tr}(\rho L_{\Phi,P})$.

Under the Jordan--Wigner transformation,
\begin{equation}\label{eq:jw}
N=\sum_j \frac{I-Z_j}{2}.
\end{equation}
Consequently, $Z$-only strings commute with $N$ and
preserve every particle-number sector. Strings containing
$X$ or $Y$ can cause leakage, but need not do so:
for example, $X^{\otimes n}$ preserves the half-filled
sector when $n$ is even. The criterion for a string
to be capable of causing leakage from the chosen
sector is $QW_\alpha P\neq0$; its contribution for
a particular input state depends on that state.

\begin{remark}[Geometry independence]\label{rem:geom}
For a fixed Pauli channel and target projector, Equation~\eqref{eq:pauli} depends on $P$, the Pauli strings $W_\alpha$, and their probabilities $p_\alpha$, but not on the molecular Hamiltonian.
Under the Jordan--Wigner transformation, the projector onto the fixed-particle-number sector is determined by the qubit count and target electron number.
Consequently, molecular systems with the same encoding, qubit count, target electron number and Pauli channel have the same leakage operator.
Their leakage values can still be different because $\operatorname{Tr}(\rho L_{\Phi,P})$ depends on the initial state.
\end{remark}

\subsection{Parameter count and local leakage sensitivity}

Parameter count records ansatz size but does not determine gate count or channel-induced leakage.
For a specified gate implementation and noise model, changing the gate sequence can change
the circuit channel and its leakage operator.
Changing $\#\theta$ alone does not imply such
a change.

The local dependence of noiseless leakage on
the variational parameters is described by
its Hessian.
Let $\psi:\Omega\to\mathcal H$ be a normalized
$C^2$ state map on an open set
$\Omega\subset\mathbb R^{d_\theta}$, where
$d_\theta=\#\theta$.
Let $P$ be a fixed orthogonal projector,
$Q=I-P$, and $\theta_0\in\Omega$ satisfy
$Q\ket{\psi(\theta_0)}=0$.
Differentiating
$\Lambda(\theta)=
\bra{\psi(\theta)}Q\ket{\psi(\theta)}$
gives
\begin{equation}\label{eq:hessian}
\begin{aligned}
\nabla\Lambda(\theta_0)=0, \qquad
\partial_{jk}^2\Lambda(\theta_0)&=2M_{jk},\qquad
M_{jk}
=\operatorname{Re}
\langle\partial_j\psi(\theta_0)|
Q|\partial_k\psi(\theta_0)\rangle.
\end{aligned}
\end{equation}
Terms involving $Q\ket{\psi(\theta_0)}$ vanish.
Consequently, for a small real parameter
perturbation $h$, we have
\[
\Lambda(\theta_0+h)
=h^{\mathsf T}Mh+o(\|h\|^2).
\]
The matrix $M$ is positive semidefinite and
quantifies leakage to second order in the
specified parameter coordinates.
If the ansatz preserves the target sector
throughout a neighborhood of $\theta_0$,
then $Q\ket{\partial_j\psi(\theta_0)}=0$
for every $j$, and hence $M=0$.

\section{Ansatz families}\label{sec:ansatz}

This section defines the excitation generators, ansatz families and SelectedUCC ranking used in the numerical evaluations.

\subsection{Particle-number-preserving generators}

Let $a_p^\dagger,a_p$ satisfy the anticommutation relations, and let $N=\sum_p a_p^\dagger a_p$.
Using $[N,a_p^\dagger]=a_p^\dagger$, $[N,a_p]=-a_p$, and
$$ [N,AB]=[N,A]B+A[N,B], $$
by induction, we have, for
$M=a_{p_1}^\dagger\cdots a_{p_r}^\dagger
a_{q_1}\cdots a_{q_s}$,
\begin{equation}\label{eq:monomial}
[N,M]=(r-s)M.
\end{equation}
Thus every equation containing equally many creation and annihilation operators commutes with $N$.

The single- and double-excitation generators are
\begin{equation}\label{eq:gens}
\begin{aligned}
T_{ai}
&=a_a^\dagger a_i-a_i^\dagger a_a,\\
T_{abji}
&=a_a^\dagger a_b^\dagger a_j a_i
-a_i^\dagger a_j^\dagger a_b a_a.
\end{aligned}
\end{equation}
Here $i,j$ index occupied spin orbitals and $a,b$ index unoccupied spin orbitals of the Hartree--Fock reference.
Both generators are anti-Hermitian, so $e^{\theta T}$ is unitary for real $\theta$.
Note that each equation in \eqref{eq:gens} contains equally many creation and annihilation operators, hence $[N,T]=0$.
By Proposition~\ref{prop:preserve}, any product of these unitaries acting on a reference in $\Hphys$ satisfies $\Lambda=0$ for every parameter choice.

\subsection{Particle-number-preserving ansatze}

Let $\mathcal P_{\mathrm{UCC}}$ denote the pool of single- and double-excitation generators in \eqref{eq:gens}.

The following families use products of excitation unitaries, with an independent real parameter for each generator.

\begin{itemize}
\item
\textbf{UCCSD}: uses every generator in $\mathcal P_{\mathrm{UCC}}$.

\item
\textbf{SelectedUCC}$_m$: uses the $m$ highest-scoring generators in $\mathcal P_{\mathrm{UCC}}$ under the one-shot score \eqref{eq:oneshot}.

\item
\textbf{RandomUCC}$_m$: uses $m$ distinct generators sampled uniformly without replacement from $\mathcal P_{\mathrm{UCC}}$.
At the same generator budget, this provides a random-selection baseline for SelectedUCC.

\item
\textbf{PCSD}: uses $m$ pair-double generators $T_{a\bar a\bar\imath i}$, repeated over $L$ layers with independent parameters, giving $mL$ parameters.
Here $i,\bar\imath$ and $a,\bar a$ denote spin-up and spin-down spin orbitals of occupied and unoccupied orbitals respectively.
These generators are special cases of \eqref{eq:gens}.
\end{itemize}

For all four families, applying the ansatz to Hartree--Fock initial state $\ket{\psi_{\mathrm{HF}}}\in\Hphys$ produces a state with $\Lambda=0$ for every parameter choice.

\subsection{Two-local ansatz without particle-number conservation}\label{sec:phealeak}

The Two-Local $\mathrm{pHEA}_{XY}$ family consists of single-qubit Pauli rotations generated by $X_q$
and $Y_q$, and two-qubit Pauli rotations generated by $X_pX_q$ and $Y_pY_q$ on adjacent qubits
\begin{equation}\label{eq:phea}
U_{\pHEA}(\theta)=
\prod_{\ell=1}^{L}
\left[
\prod_q
e^{-i\alpha_{\ell q}X_q/2}
e^{-i\beta_{\ell q}Y_q/2}
\prod_{(p,q)\in\mathcal E}
e^{-i\gamma_{\ell pq}X_pX_q/2}
e^{-i\delta_{\ell pq}Y_pY_q/2}
\right].
\end{equation}
Here $\ell=1,\ldots,L$ where $L$ is the number of layers, indexes the circuit layers, $q$ indexes the qubits, and $\mathcal E=\{(q,q+1):q=1,\ldots,n-1\}$.
The real parameters $\alpha_{\ell q}$ and $\beta_{\ell q}$ are the rotation angles for
$X_q$ and $Y_q$, respectively, while $\gamma_{\ell pq}$ and $\delta_{\ell pq}$
are the rotation angles for $X_pX_q$ and $Y_pY_q$ on edge $(p,q)$.
All angles are independently parameterized and
collectively form the variational parameter
vector $\theta$.

Under the Jordan--Wigner encoding, $N=\sum_{q=1}^{n}(I-Z_q)/2$.
The generators $X_q$, $Y_q$, $X_pX_q$, and $Y_pY_q$ do not commute with $N$.
Consequently, the ansatz does not preserve $\Hphys$ for all parameter values, and a state
$U_{\pHEA}(\theta)\ket{\psi_{\mathrm{HF}}}$ may have nonzero leakage even though $\ket{\psi_{\mathrm{HF}}}\in\Hphys$.

\subsection{Raw, projected and penalized objectives}\label{sec:objectives}
For the normalized circuit state $\ket{\psi(\theta)}=U(\theta)\ket{\psi_{\mathrm{HF}}}$, let $p(\theta)=\|P\ket{\psi(\theta)}\|^2=1-\Lambda(\theta)$. The three training objectives are
\begin{alignat}{2}
&F_{\mathrm{raw}}(\theta)&&=\bra{\psi(\theta)}H\ket{\psi(\theta)},\label{eq:obj-raw}\\
&F_{\mathrm{proj}}(\theta)&&=\frac{\bra{\psi(\theta)}PHP\ket{\psi(\theta)}}{p(\theta)},\qquad p(\theta)>0,\label{eq:obj-proj}\\
&F_\mu(\theta)&&=F_{\mathrm{proj}}(\theta)+\mu\Lambda(\theta),\qquad p(\theta)>0,\quad\mu\ge0.\label{eq:obj-penalty}
\end{alignat}
The optimizer minimizes the chosen function, starting from a parameter vector. The Hartree--Fock initial state remains fixed.  The circuit prepares $\ket{\psi(\theta)}$ for all three objectives. Projection is evaluated on the saved statevector, not implemented as an additional circuit operation.

At every iteration with $p(\theta)>0$, both raw and projected energy can be evaluated. Raw energy belongs to the full circuit state and projected energy belongs to $P\ket{\psi(\theta)}/\sqrt{p(\theta)}$. Thus a projected-energy curve for a circuit which is optimized under raw energy is a diagnostic. 

For $0<\Lambda<1$, since $H$ is block diagonal, we have
\begin{equation}\label{eq:raw-decomposition}
F_{\mathrm{raw}}=(1-\Lambda)F_{\mathrm{proj}}+\Lambda E_Q,\qquad
E_Q=\frac{\bra{\psi(\theta)}QHQ\ket{\psi(\theta)}}{\Lambda}.
\end{equation}
Raw energy therefore depends on the energies and probabilities of both components. 
Projected energy contains no direct penalty for outside-sector probability. 
The penalty in \eqref{eq:obj-penalty} uses projected energy which differs from the expectation of $H+\mu(I-P)$ considered by Yen et al.\ \cite[Eq.~(13)]{yen2019projectors}, which is $F_{\mathrm{raw}}+\mu\Lambda$. 
Related penalty methods are discussed in Ref.~\cite{kuroiwa2021}.

\subsection{SelectedUCC ranking score}\label{sec:rules}

Selection of reduced UCC excitation sets has been
studied through energy-based operator sorting \cite{fan2021energysorting} and the unitary selective coupled-cluster method \cite{fedorov2022uscc}.
Here, SelectedUCC$_m$ denotes the specific selection procedure defined below.

For an anti-Hermitian generator $T$ and a normalized state $\ket{\psi}$, we define
\begin{equation}\label{eq:Epsi}
E_\psi(t)
=\braket{\psi(t)|H|\psi(t)},
\qquad
\ket{\psi(t)}=e^{tT}\ket{\psi},
\qquad t\in\mathbb R.
\end{equation}
Because $T$ is anti-Hermitian, $\ket{\psi(t)}$ remains normalized.

SelectedUCC evaluates every candidate once at the Hartree--Fock initial state and retains the $m$ highest-scoring generators.
Writing $E_{\mathrm{HF}}(t)=E_{\psi_{\mathrm{HF}}}(t)$ and $\varepsilon_1=5\times10^{-2}$, the score is
\begin{equation}\label{eq:oneshot}
\begin{aligned}
s(T)
={}&
\underbrace{
\left|
\frac{
E_{\mathrm{HF}}(\varepsilon_1)
-E_{\mathrm{HF}}(-\varepsilon_1)}
{2\varepsilon_1}
\right|
}_{\text{finite-difference gradient term}}
\\
&+
\underbrace{
\frac{
\max\left\{
E_{\mathrm{HF}}(0)-E_{\mathrm{HF}}(\varepsilon_1),
E_{\mathrm{HF}}(0)-E_{\mathrm{HF}}(-\varepsilon_1),
0
\right\}}
{\varepsilon_1}
}_{\text{finite-step energy-decrease term}}.
\end{aligned}
\end{equation}

As $\varepsilon_1\to0$, the first term converges to
\[
\left|
\braket{\psi_{\mathrm{HF}}|[H,T]|\psi_{\mathrm{HF}}}
\right|,
\]
the energy-gradient magnitude used in ADAPT-VQE selection \cite{grimsley2019}, evaluated here at the Hartree--Fock initial state.
The second term is positive when at least one of the two trial steps lowers the energy.
Unlike ADAPT-VQE, this procedure does not repeat the selection during the optimization.

\subsection{Excitation-plus-commutator pool}
\label{sec:enrichedpool}

Lie-algebraic constructions have been studied in VQE generator
sets \cite{izmaylov2020order}.
Here we use a finite excitation-plus-commutator
pool, without requiring commutator closure for the whole Lie algebra.

We extend $\mathcal P_{\mathrm{UCC}}$ by including commutators of excitation generators. 
For $A,B\in\mathcal P_{\mathrm{UCC}}$, define
\begin{equation}\label{eq:commgen}
C_{AB}=[A,B]=AB-BA.
\end{equation}
Let $\mathcal C$ be a collection of nonzero
commutators of this form. 
The enriched candidate pool is
\begin{equation}\label{eq:enrichedpool}
\mathcal P_{\mathrm{enr}}=
\mathcal P_{\mathrm{UCC}}\cup\mathcal C.
\end{equation}
The choice of $\mathcal C$ is specified in the numerical section. 

If $A$ and $B$ are anti-Hermitian, then $[A,B]$ is anti-Hermitian. 
Since the excitation generators commute with $N$,
\begin{equation}
[N,[A,B]]=[[N,A],B]+[A,[N,B]]=0.
\end{equation}
Thus every generator in $\mathcal P_{\mathrm{enr}}$
commutes with $N$. 
Applying a product of their exponential unitaries to a state in $\Hphys$ therefore produces a state with $\Lambda=0$.

\subsection{Tangent-residual ranking at the Hartree--Fock state}
\label{sec:alternative-rules}

Selection of reduced UCC excitation sets has been
studied through energy-based operator sorting
\cite{fan2021energysorting} and the unitary selective
coupled-cluster method \cite{fedorov2022uscc}.

To distinguish energy ranking from selection of state directions, we define
\begin{align}
v_T(\psi)&=(I-\ket\psi\bra\psi)T\ket\psi,\qquad
w_T(\psi)=\begin{pmatrix}\operatorname{Re}v_T(\psi)\\\operatorname{Im}v_T(\psi)\end{pmatrix},\\
S_k(\psi)&=\operatorname{span}_{\mathbb R}\{w_A(\psi):A\in\mathcal A_k\},\qquad
r_k(T;\psi)=(I-\Pi_{S_k(\psi)})w_T(\psi).
\end{align}
Here $\mathcal A_k$ is the selected list and $\Pi_{S_k}$ is its real orthogonal projector. 
The vector $v_T$ removes the component parallel to the state and $r_k$ is the part of the candidate tangent outside the selected span. 
All rules in this subsection evaluate these quantities at the fixed Hartree--Fock initial state.

Here we introduce SelectedUCC, LieBase, LieComm, and
HamiltonianLie selection rules.

\begin{itemize}
\item \textbf{LieBase} selects the excitation candidate with largest $\|r_k\|_2$. 
It favors the largest remaining tangent component outside the selected span, without evaluating its energy benefit.

\item \textbf{LieComm} uses the same residual-norm score on $\mathcal P_{\mathrm{enr}}$. 
It tests whether commutator candidates change this geometric selection. 
\item \textbf{HamiltonianLie} combines normalized energy score with the relative residual,
\begin{equation}\label{eq:hamiltonianlie}
h_k(T)=\widetilde s(T)+\eta\frac{\|r_k(T;\psi_{\mathrm{HF}})\|_2}{\|w_T(\psi_{\mathrm{HF}})\|_2+\delta},
\end{equation}
where $\widetilde s(T)=s(T)/s_{\max}$ if $s_{\max}=\max_{\mathcal P_{\mathrm{enr}}}s>0$, and zero otherwise. 
The first term favors the energy-score criterion and the second term favors a tangent component not represented by the selected span.
$\eta$ is its weight.
\end{itemize}

Residual vectors are not normalized before the LieBase/LieComm comparison, so their scores depend on tangent magnitude as well as direction. 
Once the requested list is selected, it is fixed and its parameters are optimized jointly.

\subsection{Adaptive hybrid selection and optimization}
\label{sec:adaptive-hybrid}

Adaptive Hybrid alternates generator selection and joint parameter optimization in the style of ADAPT-VQE \cite{grimsley2019}, using the
hybrid score defined below.

Let $\ket{\psi_k}$ denote the normalized optimized state of the ansatz containing $k$ selected generators with
$\ket{\psi_0}=\ket{\psi_{\mathrm{HF}}}$.
For an anti-Hermitian candidate generator $T$, define
$$ E_{\psi_k,T}(t) =
\bra{\psi_k}e^{-tT}He^{tT}\ket{\psi_k},
\qquad t\in\mathbb R.$$
The candidate score is
\begin{equation}\label{eq:adaptive-score}
a_k(T) =
\left|
\frac{E_{\psi_k,T}(\varepsilon_a)
      -E_{\psi_k,T}(-\varepsilon_a)}
     {2\varepsilon_a}
\right|
+
\kappa
\frac{\|r_k(T;\psi_k)\|_2}
     {\|w_T(\psi_k)\|_2+\delta}.
\end{equation}
At each selection step, an eligible generator maximizing $a_k(T)$ is appended to the ansatz with its new angle initialized to zero.
All angles are then optimized jointly to obtain $\ket{\psi_{k+1}}$.

The energy term estimates a slope at the current state, while the residual term rewards a tangent component outside the span of previously selected generators evaluated at that same state. Unlike \eqref{eq:oneshot}, this score has no finite-step energy-decrease term. 

The residual basis is rebuilt after each joint optimization. Selection stops at the budget or when no eligible candidate remains, followed by final parameter optimization.

\section{Numerical experiments}\label{sec:numerics}

We evaluate projected-sector error, leakage, and parameter count using noiseless statevector simulations. No finite-shot or hardware results are reported.

\subsection{Molecular systems and numerical quantities}

We use noiseless statevectors, STO-3G Hamiltonians and the Jordan--Wigner transformation. 
PennyLane's differentiable Hartree--Fock backend constructs the Hamiltonians \cite{bergholm2018}. 
Independent PySCF RHF/CASCI calculations \cite{sun2018} check the active-space ground energies.
Each $\EFCI$ is obtained by exact diagonalization restricted to the stated particle-number sector. 
Table~\ref{tab:systems} lists the molecule and target particle-number sectors.
For H$_6$, BeH$_2$ and LiH, the lowest doubly occupied spatial orbital is frozen and the next four spatial orbitals are retained.

\begin{table}[htbp]
\centering\small
\begin{tabular}{@{}lrrrrr@{}}\toprule
System & Distance (\AA) & $n$ & $N_e$ & $d_{\mathrm{phys}}$ & $\EFCI$ [Ha]\\\midrule
H$_4$ chain &1.400&8&4&70&$-2.0290705$\\
H$_4$ stretched &2.400&8&4&70&$-1.8746516$\\
H$_6$ $(4e,4o)$ &1.400&8&4&70&$-2.9359223$\\
BeH$_2$ $(4e,4o)$ &1.326&8&4&70&$-15.5662352$\\
LiH $(2e,4o)$ &1.600&8&2&28&$-7.8636816$\\\bottomrule
\end{tabular}
\caption{Molecular systems. Distances are adjacent-atom separations for the linear chains and bond lengths for BeH$_2$ and LiH. $(ae,bo)$ denotes $a$ active electrons in $b$ active spatial orbitals. $d_{\mathrm{phys}}=\binom n{N_e}$ is the dimension of the target particle-number sector.}\label{tab:systems}
\end{table}

Every run starts from the same Hartree--Fock state for its molecule. For each molecule, method and budget, parameter seeds are $0$ -- $199$. For RandomUCC, seed $r+10000m$ selects an $m$-element excitation subset for parameter seed $r$.

All optimizations use L-BFGS-B \cite{byrd1995} with two-point numerical derivatives and a relative finite-difference step of $10^{-4}$.
Table~\ref{tab:controls} lists the settings for the nonadaptive evaluations and the generator-budget comparison.

For each molecule and generator budget, adaptive selection and parameter optimization produce
one generator sequence and an optimized parameter vector.
The sequence is then held fixed for $200$ optimization runs, each initialized by independently perturbing that parameter vector. 

\begin{table}[htbp]
\centering
\small
\caption{Optimization controls.
A: common nonadaptive protocol.
B0: one-shot and random controls in the generator-budget comparison.
B1: optimization after each adaptive selection
and optimization of the completed sequence.
B2: optimization of the fixed adaptive sequence.}
\label{tab:controls}
\begin{tabular}{@{}lrrrr@{}}
\toprule
Setting & A & B0 & B1 & B2 \\
\midrule
Iteration cap
& 400 & 280 & 90 / 320 & 320 \\
Function-evaluation cap
& 200000 & 15000 & 15000 & 15000 \\
\texttt{ftol}
& $10^{-10}$ & $10^{-11}$ & $10^{-10}$ & $10^{-11}$ \\
\texttt{gtol}
& $10^{-6}$ & $10^{-7}$ & $10^{-7}$ & $10^{-7}$ \\
Maximum line-search steps
& 30 & 40 & 30 & 40 \\
\bottomrule
\end{tabular}

\par\smallskip
\parbox{\linewidth}{
A and B0 initialize each angle independently from $\mathcal N(0,10^{-4})$.
In B1, each newly appended angle is initialized at zero.
The iteration caps are $90$ after each selection and $320$ after completing the sequence.
B2 adds independent $\mathcal N(0,10^{-8})$ perturbations to the stored optimized angles before optimization.
}
\end{table}

All runs with finite final results are included, including $97$ runs under protocol A and $13$ runs in the
generator-budget comparison for which the optimizer reported unsuccessful termination.
Of these, five and four, respectively, reached the iteration cap and none reached the function-evaluation cap.
Successful termination indicates that the optimizer's stopping criterion was satisfied. 
It does not certify a global minimum.

Statevectors use \texttt{complex128} to calculate.
Means and sample standard deviations are computed across optimization runs. 
Shaded regions and bars indicate minimum--maximum ranges.
Leakage in the figures is calculated directly as
$$\Lambda(\psi)=\langle\psi|Q|\psi\rangle,$$
by summing the computational basis probabilities outside the target particle-number sector.

The penalty is $\mu\min\{1,\max\{0,1-p\}\}$, which equals $\mu\Lambda$.
We define chemical accuracy by $\epsphys<1.6\times10^{-3}$~Ha.

The numerical environment uses Python 3.10.12, PennyLane 0.42.3, PySCF 2.6.2, NumPy 2.2.4, and SciPy 1.14.1.

\paragraph{Selection settings.}
The one-shot step is $\varepsilon=0.05$. 
The enriched list uses commutators of the first $\min(20,|\mathcal P_{\mathrm{UCC}}|)$ energy-ranked excitations, keeping at most $120$ nonzero commutators ranked by \eqref{eq:oneshot}. 
For the fixed-state geometric rules, we use $\eta=0.55$, $\delta=10^{-12}$, and candidates with residual norm below $10^{-9}$ are excluded.
We use the Gram--Schmidt process to build the residual basis.
The weight $0.55$ is a fixed experimental choice. 
Adaptive selection uses $\varepsilon_a=0.025$, $\kappa=10^{-3}$ Ha and $\delta=10^{-12}$.
Basis and candidate residuals of norm at most $10^{-10}$ are discarded. 
If the Score ties, we keep the first encountered candidate. 

\subsection{Projected accuracy and leakage}\label{sec:diag}
In the H$_4$ chain diagnostic (Fig.~\ref{fig:diagnostic}), UCCSD and SelectedUCC$_{12}$ have mean errors $9.49\times10^{-4}$ and $9.57\times10^{-4}$ Ha with zero leakage. 
At eight parameters, SelectedUCC has mean error $3.78\times10^{-3}$ Ha, versus $9.81\times10^{-2}$ Ha for RandomUCC.
The six preserving ansatze have zero leakage at every recorded iteration in all $1,200$ runs.

For the 30-parameter $\pHEA$, raw-energy optimization gives mean projected error $0.1413$ Ha and mean leakage $2.17\times10^{-10}$.
Projected-energy optimization gives a smaller mean projected error, $8.18\times10^{-2}$ Ha, but mean leakage $0.9574$. 
Its mean raw energy is $-0.3472$ Ha, compared with $-1.8878$ Ha after raw-energy optimization: lower projected energy is not lower raw energy. 
The $\mu=10$ Ha penalty gives mean leakage $9.25\times10^{-12}$ and mean projected error $0.1413$ Ha. 
The $\mu=1,100$ Ha results similarly have leakage below $3.33\times10^{-11}$ and projected errors near $0.1413$ Ha.
None reaches chemical accuracy. 
Relative to projected-only optimization, the penalized outcomes have lower leakage but larger projected error and relative to raw-energy optimization, they show no projected-error improvement.

\begin{figure}[htbp]
\centering
\includegraphics[width=0.92\linewidth]{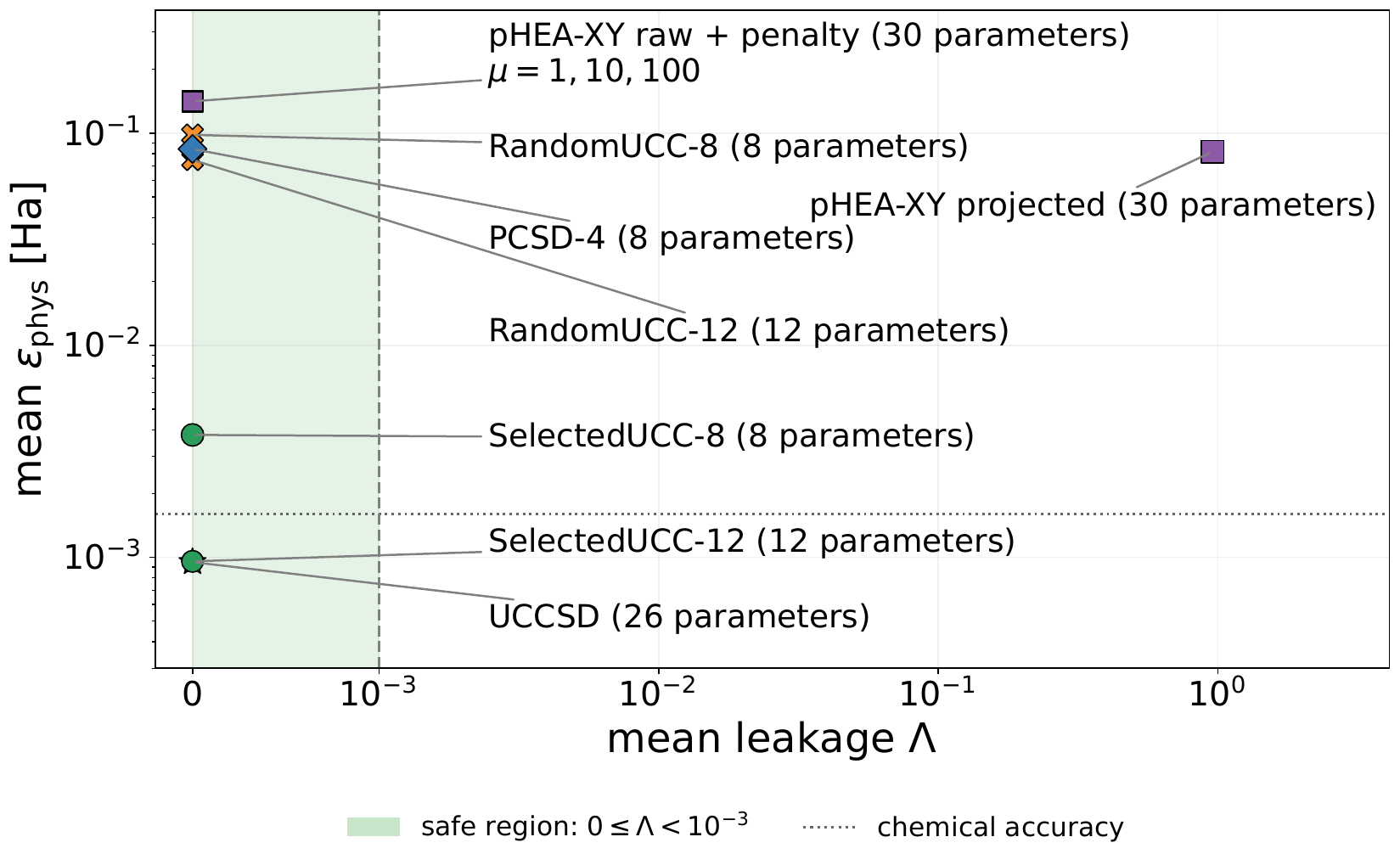}
\caption{H$_4$ chain final-state diagnostic: means over $200$ optimizer seeds per ansatz and objective. Labels give parameter counts. The chosen leakage interval $0\leq\Lambda<10^{-3}$ is shaded. The dotted horizontal line is chemical accuracy.}\label{fig:diagnostic}
\end{figure}

For BeH$_2$ and LiH, the circuit uses two layers, each applying $R_Y$ rotations to all qubits, followed by $XX$ rotations on all adjacent  pairs and then $YY$ rotations on all adjacent pairs. 
This circuit has $44$ parameters.

For BeH$_2$ and LiH molecules, the experiment gives projected energies within chemical accuracy in all $200$ runs on each molecule, while mean leakage is $39.8\%$ and $42.6\%$ (Table~\ref{tab:additional}).
All $400$ optimizations terminate successfully and their maximum projected errors are $1.07\times10^{-4}$ and $8.43\times10^{-5}$ Ha, respectively. 
These are examples of chemically accurate projected energies with substantial leakage, unlike the H$_4$ results.

\begin{table}[htbp]
\centering\small
\begin{tabular}{@{}lcc@{}}\toprule
System & $\epsphys$ [Ha] & $\Lambda$\\\midrule
BeH$_2$ $(4e,4o)$ & $(2.35\pm7.46)\times10^{-6}$ & $0.398\pm0.291$\\
LiH $(2e,4o)$ & $(0.825\pm6.08)\times10^{-6}$ & $0.426\pm0.303$\\\bottomrule
\end{tabular}
\caption{Optimization using $F_{\mathrm{proj}}$, $200$ optimizer seeds per molecule. Values are mean $\pm$ sample standard deviation, not from measurement shots.}\label{tab:additional}
\end{table}

\subsection{Optimization Energy trajectories}\label{sec:traj}
Figure~\ref{fig:trajectory} reports projected energy, raw energy and leakage along the same parameter trajectories. 
UCCSD and SelectedUCC$_8$ have zero leakage at every iteration, so their raw and projected energies coincide and their final energies are lower than their initial energies.

For the 30-parameter $\pHEA$ optimized using $F_{\mathrm{proj}}$, mean projected energy decreases from $-1.88779$ to $-1.94731$ Ha, while mean raw energy increases from $-1.88734$ to $-0.34718$ Ha. 
Mean leakage increases from $6.80\times10^{-4}$ to $0.9574$. 
Even though projected energy decreases in every run, but final projected errors remain $0.08132$--$0.08443$ Ha, which is outside of chemical accuracy. 
Final leakage spans $0.6697$--$0.9944$ with $191$ runs exceed $0.93$, and five reach the iteration cap. Thus this H$_4$ experiment demonstrates reduced projected error with increased outside-sector probability, not chemically accurate projected energy.
The BeH$_2$ and LiH results are in Table~\ref{tab:additional}.

The nearly leakage-free endpoints obtained using $F_{\mathrm{raw}}$ and $F_\mu$ with $\mu=10$ Ha instead have energies near $E_{\mathrm{HF}}=-1.8877903066$ Ha, above $\EFCI=-2.0290704961$ Ha.
Their final raw and projected energies, respectively, differ from $E_{\mathrm{HF}}$ by at most $2.46\times10^{-9}$ and $1.76\times10^{-8}$ Ha. 
This is agreement in energy, not a claim that their states equal the Hartree--Fock state.
Neither the Hartree--Fock initial state nor the observed small leakage establishes ground-state preparation.

\begin{figure}[htbp]
\centering
\includegraphics[width=\linewidth,keepaspectratio]{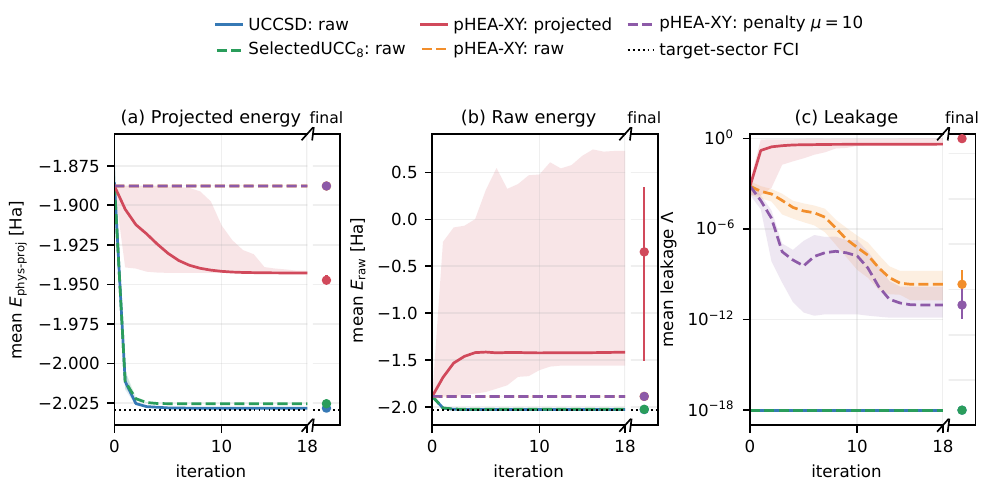}
\caption{H$_4$ chain trajectories, $200$ seeds per method: (a) projected energy, (b) raw energy and (c) leakage. pHEA labels identify the optimized objective \eqref{eq:obj-raw}--\eqref{eq:obj-penalty}. Curves show means and shading/final bars show minimum--maximum ranges. The penalty uses $\mu=10$ Ha. The $\EFCI$ line denotes the target-sector ground energy, not necessarily the minimum over all particle-number sectors. Leakage uses $10^{-18}$ as the display floor.}\label{fig:trajectory}
\end{figure}

\subsection{Fixed-generator budget comparisons}\label{sec:budget}
Figure~\ref{fig:budget} compares fixed-generator budgets $m\in\{2,4,6,8,10,12,16\}$ on the two H$_4$ geometries and active-space H$_6$.
In every tested molecule and budget pair, SelectedUCC, LieBase, LieComm and HamiltonianLie select the same ordered generators. 
Their final errors agree exactly across the $200$ seeds. 
In these selections, geometric-score ties reproduce the same sequence hence the curves do not establish an independent geometric improvement.

SelectedUCC has lower mean error than RandomUCC at $m=16$ on all three systems (Table~\ref{tab:budget}). This ordering does not hold at every budget.
On stretched H$_4$ with $m=2$, RandomUCC has mean error $0.5006$ Ha, but $0.5149$ Ha for SelectedUCC.
At $m=8$, stretched-H$_4$ SelectedUCC has mean error $0.01820$ Ha and standard deviation $0.04494$ Ha. 
Its mean error is not monotone in the generator budget.

\begin{table}[htbp]
\centering\small
\begin{tabular}{@{}lcc@{}}\toprule
System & SelectedUCC & RandomUCC\\\midrule
H$_4$ chain & $9.49\times10^{-4}\ (1.25\times10^{-10})$ & $4.90\times10^{-2}\ (2.45\times10^{-2})$\\
H$_4$ stretched & $7.64\times10^{-3}\ (2.07\times10^{-6})$ & $2.07\times10^{-2}\ (2.94\times10^{-2})$\\
H$_6$ $(4e,4o)$ & $2.32\times10^{-4}\ (1.06\times10^{-10})$ & $3.79\times10^{-2}\ (1.81\times10^{-2})$\\\bottomrule
\end{tabular}
\caption{Common-protocol comparison at $m=16$: mean projected-sector error, with sample standard deviation in parentheses. $200$ optimizer seeds per entry. Unsuccessful finite endpoints are also included.}\label{tab:budget}
\end{table}

\begin{figure}[htbp]
\centering
\includegraphics[width=\linewidth,keepaspectratio]{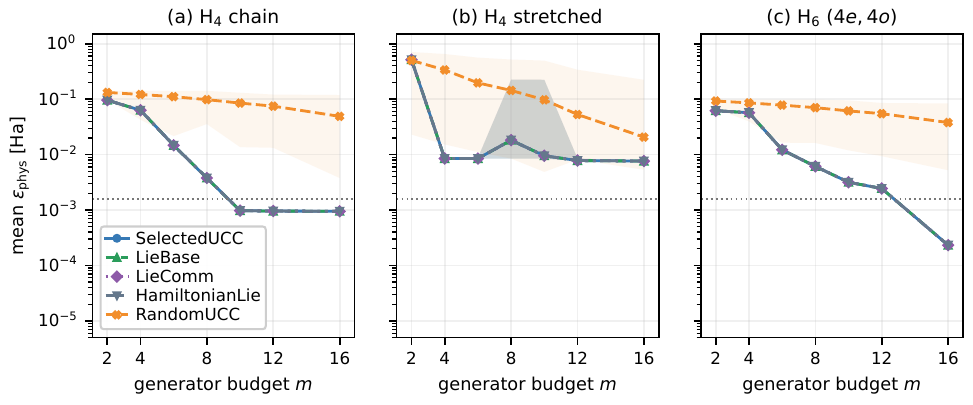}
\caption{Common-protocol fixed-generator comparisons on all three budget-test systems. Curves show means over $200$ seeds per method and budget, with minimum--maximum shading. Dotted horizontal line is chemical accuracy. The four deterministic rules select identical ordered generators.}\label{fig:budget}
\end{figure}

\subsection{Adaptive hybrid comparison}
\label{sec:adaptive-results}

We compare Adaptive Hybrid, one-shot SelectedUCC, and RandomUCC using $200$ final optimization runs
per method and generator budget.
The optimizer settings are B0--B2 in Table~\ref{tab:controls}.

Adaptive Hybrid constructs the ansatz iteratively that each generator is selected using the current state, appended to the ansatz, and followed by joint parameter optimization before the next selection.
For each molecule and generator budget, this construction is performed once.
The resulting generator sequence and order are then held fixed during $200$ final optimization runs, each initialized by perturbing the stored optimized parameters.

Figure~\ref{fig:adaptive} shows the comparison for three molecular systems.
Adaptive Hybrid achieves a lower mean projected-sector error than one-shot SelectedUCC at all five entries in Table~\ref{tab:adaptive}.
At some other budgets the differences are smaller: for the H$_4$ chain, the mean errors nearly
coincide at $m=6,8,10$.

\begin{table}[htbp]
\centering
\small
\begin{tabular}{@{}lrcc@{}}
\toprule
System & $m$ & One-shot SelectedUCC & Adaptive Hybrid \\
\midrule
H$_4$ chain
& 4 & $6.31\times10^{-2}$ & $2.16\times10^{-2}$ \\
H$_4$ chain
& 16 & $9.49\times10^{-4}$ & $1.45\times10^{-4}$ \\
H$_4$ stretched
& 2 & $5.15\times10^{-1}$ & $2.30\times10^{-2}$ \\
H$_6$ $(4e,4o)$
& 4 & $5.70\times10^{-2}$ & $1.71\times10^{-2}$ \\
H$_6$ $(4e,4o)$
& 16 & $2.32\times10^{-4}$ & $1.20\times10^{-4}$ \\
\bottomrule
\end{tabular}
\caption{
Mean projected-sector error [Ha] over $200$ final optimization seeds per entry.
As the entries shown, sample standard deviations do not exceed $8.53\times10^{-10}$~Ha for one-shot
SelectedUCC and $1.74\times10^{-9}$~Ha for Adaptive Hybrid.
}\label{tab:adaptive}
\end{table}

\begin{figure}[htbp]
\centering
\includegraphics[width=\linewidth,keepaspectratio]
{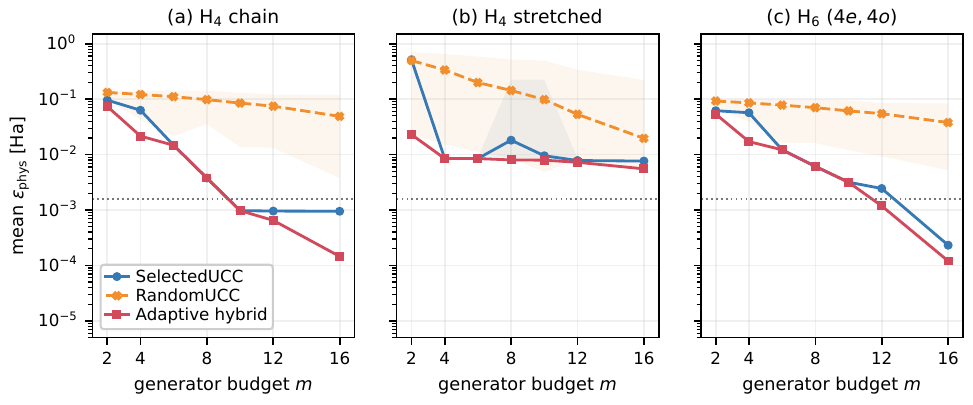}
\caption{
Adaptive Hybrid (red), one-shot SelectedUCC, and RandomUCC.
Curves show mean projected-sector error over
$200$ final optimization runs per method and generator budget. 
Shading indicates minimum--maximum ranges.
For Adaptive Hybrid, the selected generator sequence is held fixed across optimization runs.
Dotted lines indicate the chemical-accuracy.
}
\label{fig:adaptive}
\end{figure}

\section{Discussion and limitations}\label{sec:discussion}

The simulations distinguish projected-sector accuracy from particle-number preservation.
On BeH$_2$ and LiH, chemically accurate projected energies might exist with substantial leakage.
Number-preserving generators eliminate leakage induced by the ansatz, but do not guarantee a sufficiently expressive ansatz or successful energy minimization.

The generator comparisons are narrower. SelectedUCC has lower mean errors than random subsets at $m=16$ on all three tested molecules, whereas the tested geometric rules supply no distinct selected sequence. 

These are small, noiseless, minimal-basis and active-space tests. 
They are not confidence intervals or measurement error. 
Parameter count does not determine gate count or noise sensitivity as well. 
The diagnostic error also needs a known $\EFCI$, so it is a benchmark quantity rather than an independently available certificate for an unsolved molecule. 
Channel identities in Section~\ref{sec:channel} describe noise analytically. There is no noisy-device or finite-shot experiment being conducted.

\section{Conclusion}\label{sec:conclusion}
Projected energy accuracy and target-sector occupation are distinct properties of a VQE state.
Reporting $(\epsphys,\Lambda,\#\theta)$ makes this distinction explicit while recording ansatz size. 
The BeH$_2$ and LiH benchmarks show that projected energies within chemical accuracy can accompany substantial outside-sector probability, whereas number-preserving excitation circuits retain zero leakage during optimization. 
Energy ranking gives lower mean errors than random subsets at $m=16$ on the three budget-test systems, but the geometric-rule agreement and the separately optimized adaptive results do not establish an independent geometric or adaptive advantage. 
Energy accuracy should therefore be assessed together with sector probability, not used as a substitute for it.

\bibliographystyle{abbrv}

\end{document}